\newtheorem{theorem}{Theorem}
\definecolor{newred}{HTML}{ff382e}
\def\E{\mathbb{E}}
\title{Estimators for Substitution Rates in Genomes \\ from Read Data}
\author{%
  \IEEEauthorblockN{Shiv Pratap Singh Rathore \, and \, Navin Kashyap}
  \IEEEauthorblockA{Department of Electrical Communication Engineering \\
                   Indian Institute of Science, Bengaluru \\
                    Email: \{shivprataps,nkashyap\}@iisc.ac.in}
}
\date{}
\begin{document}

\maketitle

\begin{abstract}
We study the problem of estimating the mutation rate between two sequences from noisy sequencing reads. Existing alignment-free methods typically assume direct access to the full sequences. We extend these methods to the sequencing framework, where only noisy reads from the sequences are observed. We use a simple model in which both mutations and sequencing errors are substitutions. We propose multiple estimators, provide theoretical guarantees for one of them, and evaluate the others through simulations.
\end{abstract}

\section{Introduction}
Estimating mutation rates in genomes is a fundamental task in genomics, with applications in comparative genomics, phylogenetics, population genetics, evolutionary biology, and disease genomics, including understanding the dynamics of cancer mutations and viral evolution. Traditional alignment-based methods for estimating mutation rates can be both computationally intensive and prone to inaccuracies, especially in highly repetitive or structurally complex regions of the genome. To address these limitations, alignment-free approaches based on $k$-mer statistics (a $k$-mer being a substring of a fixed length $k$) have gained attention in recent years. In this study, we focus on a simplified problem: estimating mutation rates between two genomes under the assumption that the only mutations are substitution errors following an independent and identically distributed (i.i.d.) model.

This problem has previously been studied under the assumption that the genome contains no repeats \cite{Ondov029827}, \cite{Blanca2021.01.15.426881}. This assumption was later relaxed in \cite{Wu2025.06.19.660607}, which allows repeated $k$-mers but assumes that whenever a $k$-mer mutates, it does not mutate to any other $k$-mer present in the string. The work in \cite{Ondov029827} estimates the substitution rate using the Jaccard similarity between $k$-mer sketches of two strings, while \cite{Blanca2021.01.15.426881} and \cite{Wu2025.06.19.660607} adopt a method-of-moments approach based on the number of distinct common $k$-mers between the source and mutated strings. The former does not account for $k$-mer multiplicities and therefore performs poorly for highly repetitive sequences, whereas the latter remains empirically effective even in the presence of high repeat content. A common limitation of all these approaches is that they require access to the abundance histogram of the source string. In contrast, we extend these ideas and introduce new techniques for the more realistic case in which only unassembled sequencing reads are available, without access to the underlying sequence or its abundance histogram. Here, a sequencing read is a substring of the underlying sequence, possibly corrupted by sequencing errors that occur during the physical process of acquiring the read; however, the location from which a given read originates is unknown.

The rest of the paper is organized as follows. Section~\ref{sec:prelims} contains some definitions and notations used throughout the paper; it also introduces the problem statement. Section~\ref{sec:nonseq} presents, for benchmarking purposes, estimators for the scenario when there is direct access to $k$-mer data from the genomic sequences. In Section~\ref{sec:seq}, we derive estimators for the situation where only noisy sequencing reads are available. Section~\ref{sec:analysis} provides some performance analysis, while Section~\ref{sec:sim} shows simulation results. The paper concludes in Section~\ref{sec:conc} with a summary and discussion of possible future research directions.

\section{Notation and Preliminaries} \label{sec:prelims}
Genomic sequences\footnote{In this paper, the terms \emph{string} and \emph{sequence} are used interchangeably. We distinguish between \emph{substrings}, which are contiguous, and \emph{subsequences}, which may be non-contiguous.} are composed of nucleotides, represented by the four letters $\mathrm{A},\mathrm{C},\mathrm{G},\mathrm{T}$. Let $x$ be a circular\footnote{The choice of a \emph{circular} genome is purely for convenience, so that we avoid having to deal with edge effects. In any case, for long strings, the edge effects will be negligible.} genomic sequence of length $G$. Fixing an arbitrary starting position, we use $x_i$ to denote the nucleotide at the $i$-th position of $x$, i.e., $x=\{x_1,\cdots,x_G \}$. We use $x[i:j]$ to denote the $(j-i+1)$-length substring $x[i:j]:=\{x_i,x_{i+1},\cdots,x_j \}$. Let $k$ be the parameter indicating $k$-mer size where a $k$-mer is a substring of length $k$, i.e., $x[i:i+k-1], 1 \leq i \leq G$. Thus, sequence $x$ contains exactly $G$ (possibly repeating) $k$-mers since $x$ is circular. (For a non-circular string, the number of $k$-mers would be $G-k+1$.) Let $\mathcal{K}$ denote the set of all distinct $k$-mers in $x$. For a $k$-mer $v \in \mathcal{K}$, we use $f_v$ to denote the number of occurences of $k$-mer $v$ in $x$. For any two $k$-mers $v,w \in \mathcal{K}$, we use $d_{H}(v,w)$ to denote Hamming distance between the two $k$-mers. The abundance histogram of $x$ is the sequence $(a_1,\cdots,a_G)$ where $a_i$ is the number of $k$-mers in $\mathcal{K}$ that occur $i$ times in $x$, i.e., $a_i = |\{v \in \mathcal{K}: f_v = i\}|$.

Let $y$ be the string (also circular) obtained by passing the circular string $x$ through a random substitution process with a rate parameter $p$: independently, the character at each position in $x$ is unchanged with probability $1-p$ and changed to one of three other nucleotides with probability $p/3$ per nucleotide. We use $f_v'$ to denote the number of occurrences of $k$-mer $v$ in $y$, and $\mathcal{K}'$ to denote set of all distinct $k$-mers in $y$.

A noiseless read of $x$ is a random substring of $x$. We make the simplifying assumption that reads are of a fixed length $L$, and the starting position of a read is independently and uniformly distributed, i.e., $\Pr(\text{a read starts at $i$-th position)}=\frac{1}{G} \ \forall i \in {1,2,\cdots,G}$. Reads are assumed to be acquired through a noisy physical sequencing process, with sequencing errors modeled as i.i.d.\ substitutions. Specifically, a noisy read of $x$ is obtained by passing a random substring $x[i:i+L-1]$ through the random substitution process described above, but this time parameterized by the sequencing error rate $s$. Multiple independent noisy reads of $x$ are obtained in this manner. If $N$ is the number of reads so obtained, the sequencing coverage of the sequence $x$ is defined to be $c := NL/G$; this is the expected number of times a given position in $x$ is covered by the $N$ acquired reads. 

\subsection{Problem Statement}
Throughout, the source string $x$ is assumed to be arbitrary but fixed. The mutated string $y$ obtained through the substitution process is random. The problem we consider is the following: \emph{Given $N$ independent noisy reads acquired from each of the sequences $x$ and $y$ (so $2N$ reads in all), estimate the rate $p$ of the substitution process that generates $y$ from $x$.}
The sequences $x$ and $y$ are not known to the estimator. The sequencing error rate $s$ may be known or unknown, depending on the specific application.

\subsection{Concentration Inequalities}
In our analysis of estimator performance, we make use the following two well-known concentration inequalities.

\subsubsection{Hoeffding's Inequality \cite{hoeffding1963probability}} 
Let $X_1,\cdots,X_n$ be independent random variables such that $a_i \leq X_i \leq b_i$ almost surely. If $X=\sum\limits_{i=1}^{n} X_i$, then  
$$\Pr(|X-\E[X]| \geq t) \leq 2 \exp \biggl(-\frac{2t^2}{\sum\limits_{i=1}^{n}(b_i-a_i)^2} \biggr)$$

\subsubsection{McDiarmid's Inequality \cite{mcdiarmid1989method}} 
Let $X_1,\ldots,X_n$ be independent random variables taking values in sets
$\mathcal{X}_1,\ldots,\mathcal{X}_n$.  
Suppose $f:\mathcal{X}_1 \times \cdots \times \mathcal{X}_n \to \mathbb{R}$
satisfies the bounded differences condition : for every $i=1,\!\cdots\!,\!n$,
$\left|f(x_1,\ldots,x_i,\ldots,x_n)-f(x_1,\ldots,x_i',\ldots,x_n)\right| \!\leq\! c_i$.
Then, for all $t>0$, 
\begin{align*}\Pr\!\bigl(|f(X_1,\ldots,X_n) - \mathbb{E} & [f(X_1,\ldots,X_n)]| \geq t \bigr) \\ 
& \ \ \ \ \ \ \ \ \ \ \leq 2\exp\!\left(-\frac{2t^2}{\sum_{i=1}^n c_i^2} \right). \end{align*}

\section{Estimators with access to $k$-mer data from the two strings} \label{sec:nonseq}
In this section and the next, we introduce a number of $k$-mer-based estimators for $p$: some use $k=1$ but others use larger $k$. The performance of these estimators depends on the underlying source strings $x$ as well as on the mutation rate $p$. To benchmark their behavior, we first work directly with $k$-mer data from the strings $x$ and $y$. We give the estimators in this section access to the $k$-mer frequency tables, i.e., the number of occurrences of each $k$-mer in $x$ and $y$. In the next section, we extend these estimators to the more realistic case of sequencing data, where only unassembled reads from $x$ and $y$ are observed.

\subsection{$k=1$}
The idea of using $k=1$ is inspired by the work of Hera et al.\ \cite{Hera2025.05.14.653858} who use the number of occurrences of a given nucleotide, say $\mathrm{A}$, in the source string $x$ and the mutated string $y$ as one of the parameters to estimate substitution and indel rates. In our work, we restrict attention to the substitution-only case and use this statistic accordingly, before extending the same idea to any value of $k$. 

For 1-mer $\mathrm{A}$, $f_{\mathrm{A}}$ and $f'_{\mathrm{A}}$ represents the number of times $\mathrm{A}$ occurs in $x$ and $y$ respectively. Now, $f_{\mathrm{A}}'=\sum\limits_{i=1}^{G} Y_i$ where $Y_i$'s are indicator random variables such that $Y_i= 1$ iff $y_i=\mathrm{A}$. Now, if $x_i=\mathrm{A}$, then $\Pr(y_i=\mathrm{A})=1-p$, otherwise $\Pr(y_i=\mathrm{A})=p/3$. Thus, $\E[f_{\mathrm{A}}']=\sum\limits_{i:x_i=\mathrm{A}} (1-p) + \sum\limits_{i:x_i \neq A} \frac{p}{3} = f_{\mathrm{A}} (1-p)+(G-f_{\mathrm{A}})\frac{p}{3}$. Replacing $\E[f'_{\mathrm{A}}]$ with the observed value $f'_{\mathrm{A}}$, we get 
\begin{equation} \label{def:phat-1}
    \widehat{p}=\frac{3\cdot (f'_{\mathrm{A}}-f_{\mathrm{A}})}{G-4f_{\mathrm{A}}}
\end{equation}
We provide a performance analysis of this estimator in Section~\ref{sec:phat-1}.

We can use a subset of 1-mers as well, let's say, we take $\mathrm{G}$ and $\mathrm{C}$ 1-mers. Then, GC fraction can be used for estimation. Note that GC fraction is given as 
$\frac{f_\mathrm{G}+f_\mathrm{C}}{G}$. Let $x_{\mathrm{GC}}$ and $y_{\mathrm{GC}}$ represent GC fraction in $x$ and $y$ respectively.

Now, $\E[f'_{\mathrm{G}}+f'_{\mathrm{C}}]=(f_{\mathrm{G}}+f_{\mathrm{C}})(1-\frac{2p}{3})+(G-f_{\mathrm{G}}-f_{\mathrm{C}})\frac{2p}{3} \Longrightarrow \E[y_{\mathrm{GC}}]=x_{\mathrm{GC}}(1-\frac{2p}{3})+(1-x_{\mathrm{GC}})\frac{2p}{3}$. Thus,\footnote{For now, to keep notation simple, we use $\widehat{p}$ to denote all our estimators; the specific estimator in use is determined by the context.}
\begin{equation} \label{def:phat-2}
    \widehat{p}=\frac{3 \cdot (y_{\mathrm{GC}}-x_{\mathrm{GC}})}{2-4x_{\mathrm{GC}}}
\end{equation}

This idea can be extended to general values of $k$ as well.

\subsection{$k \ge 1$} 
Observe that for a $k$-mer $v$, we have $\E[f'_{v}]=\sum\limits_{w \in \mathcal{K}} f_{w}(1-p)^{k-d_{H}(v,w)}(p/3)^{d_{H}(v,w)}$. The idea is to consider a subset $\mathcal{S} \subset \mathcal{K}$ such that $\sum\limits_{v \in \mathcal{S}} f_{v}$ is large enough that $\sum\limits_{v \in \mathcal{S}} f'_{v}$ can be confidently assumed to be close to its expected value, $\sum\limits_{v \in \mathcal{S}} \E[f'_{v}]=\sum \limits_{v \in \mathcal{S}}\sum\limits_{w \in \mathcal{K}} f_{w}(1-p)^{k-d_{H}(v,w)}(p/3)^{d_{H}(v,w)}$. 
Then, the estimate $\widehat{p}$ can be obtained numerically as the solution of
\begin{equation} \label{def:phat-3}
    \sum\limits_{v \in \mathcal{S}} f'_{v}=\sum \limits_{v \in \mathcal{S}}\sum\limits_{w \in \mathcal{K}} f_{w}(1-\widehat{p})^{k-d_{H}(v,w)}\left(\frac{\widehat{p}}{3}\right)^{d_{H}(v,w)}.
\end{equation}

\subsection{Large $k: \ k \geq 20$}
For large values of $k$, $|\mathcal{K}|$ is quite large and $|\mathcal{K}| \cdot |\mathcal{S}|$ pairwise Hamming distances need to be computed, which is computationally expensive, and moreover, it makes the estimator quite complex. To address this issue, we adopt the simplifying assumption used in \cite{Wu2025.06.19.660607}, namely that if a $k$-mer $v$ mutates, it doesn't mutate to any other $k$-mer $w \in \mathcal{K}$ with $w \neq v$.

Under this assumption, $\E[f'_{v}]=f_v (1-p)^{k} \Longrightarrow \sum\limits_{v \in \mathcal{S}}\E[f'_{v}] = \sum\limits_{v \in \mathcal{S}} f_{v} (1-p)^{k} = (1-p)^{k} \sum\limits_{v \in \mathcal{S}} f_{v}$. If we take $\mathcal{S}=\mathcal{K}$, then $\sum\limits_{v \in \mathcal{K}}\E[f'_{v}]=G(1-p)^k $. We thus obtain the estimator
\begin{equation} \label{def:phat-4}
\widehat{p}=1-\left(\dfrac{\sum\limits_{v \in \mathcal{K}}f'_{v}}{G} \right)^{1/k}.
\end{equation}

Note that our estimator is similar to the estimators designed for this problem in \cite{Blanca2021.01.15.426881},\cite{Wu2025.06.19.660607}, the difference in our approach being that we account for multiplicities of $k$-mers in both $x$ and $y$.

Next, we allow our estimators to access only the noisy reads observed from sequencing data.

\section{Estimators with access only to noisy sequencing reads} \label{sec:seq}
 Now, the goal is to estimate the mutation rate $p$ given $N$ noisy reads each for $x$ and $y$ of fixed length $L$.

For a $k$-mer $v$, we use $h_v$ and $h'_{v}$ to denote number of occurrences of $k$-mer $v$ across all the $N$ reads of $x$ and $y$ respectively.

Now, consider the reads of string $x$. Then, $\Pr(\text{positions } x[i:i+k-1] \text{ are covered by a particular read})=\frac{L-k+1}{G} \ \forall i \in \{1,\cdots,G \}$. Let $r_i$ and $r^{'}_{i}$ denote number of times positions $x[i:i+k-1]$ and $y[i:i+k-1]$ are covered across all the reads of $x$ and $y$ respectively. Then, $\E[r_i]=\E[r^{'}_{i}]=\frac{N(L-k+1)}{G}$. 

Thus, for any $k$-mer $v$,
\begin{align}
    \E&[h_v] \notag \\ &=\sum\limits_{i=1}^{G}\!\E[r_i] (1\!-s)^{k-d_{H}(v,x[i:i+k-1])}\biggl(\frac{s}{3}\biggr)^{d_{H}(v,x[i:i+k-1])}  \nonumber \\
    &= \sum\limits_{w \in \mathcal{K}}f_w \!\left(\!\frac{N(L-k+1)}{G} \!\right) \!(1-s)^{k-d_{H}(v,w)}\biggl(\frac{s}{3}\biggr)^{d_{H}(v,w)} \label{eq:reads_kmer}
\end{align}

\subsection{$k=1$}
For $k=1$, we use fraction of $\mathrm{A}$'s in reads of $x$ and $y$, i.e., $\frac{h_{\mathrm{A}}}{NL}$ and $\frac{h'_{\mathrm{A}}}{NL}$ to estimate $p$. From (\ref{eq:reads_kmer}), we get
\begin{align*}
    \E[h_{\mathrm{A}}] &=\frac{NL}{G} \left(f_{\mathrm{A}}(1-s)+(G-f_{\mathrm{A}})\frac{s}{3} \right) \\
    &=\frac{NL}{G} \left(f_{\mathrm{A}} \left(1-\frac{4s}{3} \right)+G \frac{s}{3} \right) 
\end{align*}
If $f'_{\mathrm{A}}$ is known, then analogously we get $$\E[h'_{\mathrm{A}}|f'_{\mathrm{A}}]=\frac{NL}{G} \left(f'_{\mathrm{A}} \left(1-\frac{4s}{3} \right)+G \frac{s}{3} \right)$$ Then,
\begin{align*}
  \E[h'_{\mathrm{A}}]&= \E[\E[h'_{\mathrm{A}} \mid f'_{\mathrm{A}}]]  \\
  &= \frac{NL}{G} \left(\E[f'_{\mathrm{A}}]\left(1-\frac{4s}{3} \right)+G\frac{s}{3} \right)  \\ 
  &=\frac{NL}{G} \left(\bigg(f_{\mathrm{A}}\bigg(1-\frac{4p}{3} \bigg)+G \frac{p}{3} \bigg) \bigg(1-\frac{4s}{3}\bigg)+G\frac{s}{3} \right)  \\
  & =\frac{NL}{G} \left(\bigg(f_{\mathrm{A}} \bigg(1-\frac{4s}{3} \bigg)+G \frac{s}{3} \bigg) \bigg(1-\frac{4p}{3} \bigg)+G\frac{p}{3} \right) \\
  & = \E[h_{\mathrm{A}}] \bigg(1-\frac{4p}{3} \bigg)+NL\frac{p}{3} 
\end{align*}

Thus, 
\begin{equation} \label{def:phat-5}
    \widehat{p}=\dfrac{3(h'_{\mathrm{A}}-h_{A})}{NL-4h_{\mathrm{A}}}
\end{equation}
Note that our estimator is independent of $s$. An analysis of the performance of this estimator is provided in Section~\ref{sec:phat-5}.

\subsection{Large $k$: $k \geq 20$}
In this case, we assume the following: 
\begin{enumerate}
    \item[(1)] If a $k$-mer $v$ mutates, it doesn't mutate to any other $k$-mer $w \in \mathcal{K}$, with $w \neq v$.
    \item[(2)] Due to sequencing errors, a $k$-mer $v$ in the string $x$ (or $y$) doesn't change to any other $k$-mer $w \in \mathcal{K}$ (or $\mathcal{K}^{'}$), with $w \neq v$, in the reads of $x$ (or $y$). 
\end{enumerate}
Then, from (\ref{eq:reads_kmer}), we obtain $\E[h_{v}]=f_v \cdot  \bigg(\dfrac{N(L-k+1)}{G} \bigg) (1-s)^k $. Analogously, we get $\E[h'_{v}|f'_{v}]= f'_v \cdot  \bigg(\dfrac{N(L-k+1)}{G} \bigg) (1-s)^k$. Using similar arguments and calculations as before, we get $\E[h'_{v}]= \E[h_v] (1-p)^k$. For a subset $\mathcal{S} \subseteq \mathcal{K}$, $\sum\limits_{v \in \mathcal{S}} \E[h'_{v}]=(1-p)^k \sum\limits_{v \in \mathcal{S}} \E[h_v] $. Thus,

\vspace{-2mm}
\begin{equation}\label{def:phat-6}
\widehat{p}=1-\left(\dfrac{\sum\limits_{v \in \mathcal{S}} h'_{v}}{\sum\limits_{v \in \mathcal{S}} h_v} \right)^{1/k}
\end{equation}
\vspace{-2mm}

Ideally, we would like to take $\mathcal{S}=\mathcal{K}$. However, in practice, some $k$-mers present in the string may not appear in the reads at all, while additional $k$-mers may be introduced due to sequencing errors. Let $\mathcal{R}$ and $\mathcal{R}'$ denote the sets of all $k$-mers observed in the reads of $x$ and $y$, respectively. We define $\mathcal{F}=\mathcal{R}\setminus\mathcal{K}$ and $\mathcal{F}'=\mathcal{R}'\setminus\mathcal{K}'$ as the sets of spurious $k$-mers introduced by sequencing errors in the reads of $x$ and $y$.

Given a $k$-mer $v\in\mathcal{R}$, it is not possible to determine whether $v\in\mathcal{K}$ or $v\in\mathcal{F}$. Therefore, we use the following estimator:
\begin{equation}\label{def:phat-8}
\widehat{p}
=
1-\left(
\frac{\sum\limits_{v\in\mathcal{R}:h_v\geq \lambda} h'_v}
{\sum\limits_{v\in\mathcal{R}:h_v\geq \lambda} h_v}
\right)^{1/k}.
\end{equation}

Observe that
$$
\sum_{v\in\mathcal{R}:h_v\geq \lambda} h_v
=
\sum_{v\in\mathcal{K}:h_v\geq \lambda} h_v
+
\sum_{v\in\mathcal{F}:h_v\geq \lambda} h_v.
$$
We choose the threshold $\lambda$ such that the total mass $\sum_{v\in\mathcal{R}:h_v\geq \lambda} h_v$ is sufficiently large, while the contribution from spurious $k$-mers, $\sum_{v\in\mathcal{F}:h_v\geq \lambda} h_v$, is minimized.

Since a $k$-mer in the reads remains unaffected by sequencing errors with probability $(1-s)^k$, we have
$\mathbb{E}[$fraction of $k$-mers in reads unaffected by sequencing errors $]=(1-s)^k$. Under Assumption~(2), this implies
$$
\mathbb{E}\!\left[
\frac{\sum\limits_{v\in\mathcal{K}} h_v}
{\sum\limits_{v\in\mathcal{R}} h_v}
\right]
=
(1-s)^k.
$$

Motivated by this observation, we select
$$
\lambda
=
\max _{\lambda' \geq 2}\left\{
\lambda' :
\sum_{v\in\mathcal{R}:h_v\geq \lambda'} h_v
\geq
(1-s)^k \sum_{v\in\mathcal{R}} h_v
\right\}.
$$
That is, we retain the most frequently occurring $k$-mers in the reads whose cumulative frequency accounts for approximately $(1-s)^k$ fraction of all observed $k$-mers. When the sequencing error rate $s$ is small, true $k$-mers $v\in\mathcal{K}$ typically appear more frequently in the reads than spurious $k$-mers $w\in\mathcal{F}$, resulting in only a small number of outliers and a negligible contribution from $\sum_{v\in\mathcal{F}:h_v\geq \lambda} h_v$.

A limitation of this estimator is that selecting the parameter $\lambda$ requires prior knowledge of the sequencing error rate $s$. Although this requirement can be relaxed by using a known upper bound on $s$, doing so leads to degraded performance. In contrast, the estimator for the $k=1$ case remains applicable even when $s$ is unknown. On the other hand, the sequence length $G$ is not required to be known for both the estimators.

\section{Performance Analysis} \label{sec:analysis}
We here analyze the performance of the $1$-mer-based estimators $\widehat{p}$ in \eqref{def:phat-1} and \eqref{def:phat-5} using the concentration inequalities of Hoeffding and McDiarmid to bound the probability (with respect to the random substitution process) that $|\widehat{p}-p| \leq \epsilon p$. We have not been successful in extending this style of analysis to the estimators that use large values of $k$, so we evaluate their performance using simulation results in the next section.

\subsection{A concentration bound for the estimator in \eqref{def:phat-1}} \label{sec:phat-1}
For the estimator $\widehat{p}$ given by \eqref{def:phat-1}, we have
\begin{align*}
    \Pr(|\widehat{p}-p| \leq \epsilon p) &=\Pr \left( \bigg|\frac{3\cdot (f'_{\mathrm{A}}-f_{\mathrm{A}})}{G-4f_{\mathrm{A}}}-p \bigg|\leq \epsilon p \right) \\
    &= \Pr \left( \bigg|f'_{\mathrm{A}}-\E[f'_{\mathrm{A}}] \bigg| \leq \frac{|G-4f_{\mathrm{A}}| \cdot p \epsilon}{3} \right) \\
    & \geq 1-2 \exp \bigg(-\frac{2|G-4f_{\mathrm{A}}|^2 \cdot p^2 \epsilon ^2}{9G} \bigg) \\
    &= 1 - 2 \exp \bigg(-\frac{32}{9}G\bigg|\frac{f_{\mathrm{A}}}{G}-0.25 \bigg|^2 p^2 \epsilon^2 \bigg)
\end{align*}
where the inequality above is obtained using Hoeffding's inequality.

Now, for fixed values of $\epsilon, G$ and $p$, we compute the fraction of $\mathrm{A}$'s required in the source string $x$ such that $\Pr(|\widehat{p}-p| \geq \epsilon p) \leq 2 \exp(-5.3) \leq 10^{-2}$. This requirement translates to  $\bigg|\frac{f_{\mathrm{A}}}{G}-0.25 \bigg| \geq \sqrt{\frac{1.5}{p^2 \epsilon^2 G}}$. For this estimator to be effective, $\bigg|\frac{f_{\mathrm{A}}}{G}-0.25 \bigg| > 0$ and the minimum deviation needed to guarantee the desired performance decreases as $p$ and $G$ increase, as illustrated in Table~\ref{tab:min_deviation_fA}.

\begin{table}[h!]
\centering
\caption{Minimum value of $ \big|\frac{f_{\mathrm{A}}}{G}-0.25\big|$ required to ensure performance guarantees for $\varepsilon=0.1$}
\label{tab:min_deviation_fA}
\begin{tabular}{c|cccc}
\hline
$\mathbf{p \backslash G}$ & $\mathbf{10^4}$ & $\mathbf{10^5}$ & $\mathbf{10^6}$ & $\mathbf{10^7}$ \\
\hline
$\mathbf{0.01}$ & 12.24  & 3.87  & 1.22  & 0.387 \\
$\mathbf{0.03}$ & 4.08 & 1.29 & 0.408 & 0.129 \\
$\mathbf{0.05}$ & 2.44 & 0.77 & 0.245 & 0.077 \\
$\mathbf{0.10}$ & 1.22 & 0.387 & 0.122 & 0.038 \\
$\mathbf{0.20}$ & 0.612 & 0.194 & 0.061 & 0.019 \\
$\mathbf{0.50}$ & 0.245 & 0.077 & 0.024 & 0.008 \\
\hline
\end{tabular}
\end{table}

\subsection{A concentration bound for the estimator in \eqref{def:phat-5}} \label{sec:phat-5}
\begin{theorem}
Assume $p,s<\frac{3}{4}$ 
and
$$
\left|\frac{f_{\mathrm{A}}}{G}-0.25\right|
\ge
\frac{3}{4p\epsilon}\left(\sqrt{\frac{C_2}{2N}}+\sqrt{\frac{C_1}{2G}}\right)
+\sqrt{\frac{C_3}{2N}}
+\left|\frac{4s}{3}\frac{f_{\mathrm{A}}}{G}-\frac{s}{3}\right|
$$
for some $C_1, C_2, C_3 > 0$. Then, for the estimator $\widehat{p}$ given by \eqref{def:phat-5}, we have
$$
\Pr(|\widehat{p}-p|\le \epsilon p)
\ge
1-2e^{-C_1}-2e^{-C_2}-2e^{-C_3}.
$$
\end{theorem}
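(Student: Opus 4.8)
\emph{Reduction.} The plan is to follow the template of Section~\ref{sec:phat-1}, now with three independent sources of randomness --- the mutation of $x$ into $y$, the sequencing of $x$, and the sequencing of $y$ --- and with a denominator that is itself random. Writing $\widehat p-p=\dfrac{3(h'_{\mathrm A}-h_{\mathrm A})-p(NL-4h_{\mathrm A})}{NL-4h_{\mathrm A}}$ and using the identity $3(\E[h'_{\mathrm A}]-\E[h_{\mathrm A}])=p(NL-4\E[h_{\mathrm A}])$, which is a rearrangement of the relation $\E[h'_{\mathrm A}]=\E[h_{\mathrm A}](1-\tfrac{4p}{3})+NL\tfrac p3$ derived just before \eqref{def:phat-5}, the numerator becomes $3(h'_{\mathrm A}-\E[h'_{\mathrm A}])-(3-4p)(h_{\mathrm A}-\E[h_{\mathrm A}])$. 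Hence the task reduces to (i) controlling the two centered quantities $h'_{\mathrm A}-\E[h'_{\mathrm A}]$ and $h_{\mathrm A}-\E[h_{\mathrm A}]$, and (ii) keeping $NL-4h_{\mathrm A}$ bounded away from $0$. Since $\tfrac{\E[h_{\mathrm A}]}{NL}-\tfrac14=\big(\tfrac{f_{\mathrm A}}{G}-\tfrac14\big)\big(1-\tfrac{4s}{3}\big)$, the correction term $\big|\tfrac{4s}{3}\tfrac{f_{\mathrm A}}{G}-\tfrac s3\big|=\tfrac{4s}{3}\big|\tfrac{f_{\mathrm A}}{G}-\tfrac14\big|$ appearing in the hypothesis is precisely what converts a statement about $\tfrac{\E[h_{\mathrm A}]}{NL}$ into one about $\tfrac{f_{\mathrm A}}{G}$; and $p,s<\tfrac34$ keeps $1-\tfrac{4p}{3}$ and $1-\tfrac{4s}{3}$ positive, so no sign problems arise.

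\emph{Three concentration events.} I would then invoke: (a) Hoeffding's inequality for $f'_{\mathrm A}=\sum_{i=1}^{G}Y_i$, a sum of $G$ independent $\{0,1\}$-valued substitution indicators, which gives $|\tfrac{f'_{\mathrm A}}{G}-\tfrac{\E[f'_{\mathrm A}]}{G}|\le\sqrt{\tfrac{C_1}{2G}}$ with probability at least $1-2e^{-C_1}$; (b) McDiarmid's inequality for $h'_{\mathrm A}$ \emph{conditioned on $y$}, viewed as a function of the $N$ independent reads of $y$ (changing one read moves $h'_{\mathrm A}$ by at most $L$, so $c_i=L$), which gives $|\tfrac{h'_{\mathrm A}}{NL}-\tfrac{\E[h'_{\mathrm A}\mid y]}{NL}|\le\sqrt{\tfrac{C_2}{2N}}$ conditionally --- and hence unconditionally --- with probability at least $1-2e^{-C_2}$; and (c) McDiarmid's inequality for $h_{\mathrm A}$, viewed as a function of the $N$ independent reads of $x$, which gives $|\tfrac{h_{\mathrm A}}{NL}-\tfrac{\E[h_{\mathrm A}]}{NL}|\le\sqrt{\tfrac{C_3}{2N}}$ with probability at least $1-2e^{-C_3}$. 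A union bound over these three events accounts for the failure probability $2e^{-C_1}+2e^{-C_2}+2e^{-C_3}$ in the statement.

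\emph{Assembling the bound.} On the intersection of the three events, I would first combine (a) and (b) using $\tfrac{\E[h'_{\mathrm A}\mid y]}{NL}=\tfrac{f'_{\mathrm A}}{G}(1-\tfrac{4s}{3})+\tfrac s3$ together with the analogous identity for $\tfrac{\E[h'_{\mathrm A}]}{NL}$, obtaining $|\tfrac{h'_{\mathrm A}-\E[h'_{\mathrm A}]}{NL}|\le\sqrt{\tfrac{C_2}{2N}}+(1-\tfrac{4s}{3})\sqrt{\tfrac{C_1}{2G}}$. Bounding the normalized numerator by the triangle inequality (using $0<3-4p<3$ and $0<1-\tfrac{4s}{3}\le 1$) and the normalized denominator below by $4\big((1-\tfrac{4s}{3})\,|\tfrac{f_{\mathrm A}}{G}-\tfrac14|-\sqrt{\tfrac{C_3}{2N}}\big)$ via (c), the quotient $|\widehat p-p|$ is at most a ratio that the displayed hypothesis on $|\tfrac{f_{\mathrm A}}{G}-0.25|$ is designed to make at most $\epsilon p$ --- in particular the hypothesis forces the denominator of that ratio to be positive. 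This last step is the only genuinely computational one.

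\emph{Main obstacle.} The delicate part is controlling $h'_{\mathrm A}$: all $N$ reads of $y$ stem from the \emph{same} random realization of $y$, so $h'_{\mathrm A}$ is entangled with both the mutation process (scale $G$) and the sequencing of $y$ (scale $N$), and no single concentration inequality applies --- flipping one coordinate of $y$ can change $h'_{\mathrm A}$ by a coverage-dependent, hence unbounded, amount. Conditioning on $y$ in step (b) is what disentangles these, at the price of the two separate error terms $\sqrt{C_1/2G}$ and $\sqrt{C_2/2N}$. The secondary difficulty, absent in Section~\ref{sec:phat-1}, is that the now-random denominator $NL-4h_{\mathrm A}$ must be certified not to vanish, which is why the hypothesis must in particular ensure $|\tfrac{f_{\mathrm A}}{G}-0.25|>0$.
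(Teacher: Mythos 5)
Your architecture coincides with the paper's: the same three sources of randomness are handled by the same three tools with the same constants --- Hoeffding over the $G$ mutation positions for $f'_{\mathrm A}$ (radius $\sqrt{\tfrac{C_1}{2G}}$ after normalization), McDiarmid over the $N$ reads of $y$ after conditioning (the paper conditions on $f'_{\mathrm A}$, you on $y$; for this statistic these are the same), McDiarmid over the $N$ reads of $x$ for $h_{\mathrm A}$ (the paper's event $B_2$), a union bound giving $2e^{-C_1}+2e^{-C_2}+2e^{-C_3}$, and the same translation of the hypothesis through $\E[h_{\mathrm A}]/(NL)=\frac{f_{\mathrm A}}{G}\bigl(1-\frac{4s}{3}\bigr)+\frac{s}{3}$. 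The one genuine difference is the reduction: you keep the exact identity $\widehat p-p=\bigl(3(h'_{\mathrm A}-\E[h'_{\mathrm A}])-(3-4p)(h_{\mathrm A}-\E[h_{\mathrm A}])\bigr)/(NL-4h_{\mathrm A})$, whereas the paper reduces $|\widehat p-p|\le\epsilon p$ to $|h'_{\mathrm A}-\E[h'_{\mathrm A}]|\le |NL-4h_{\mathrm A}|\,p\epsilon/3$ and uses the fluctuation of $h_{\mathrm A}$ only to lower-bound the denominator; the term $(3-4p)(h_{\mathrm A}-\E[h_{\mathrm A}])$ that you carry is simply absent there.

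The gap is in the step you defer as ``the only genuinely computational one'': with the stated constants it does not close. On your three events, the numerator of $\widehat p - p$ divided by $NL$ is at most $3\bigl(\sqrt{\tfrac{C_2}{2N}}+(1-\tfrac{4s}{3})\sqrt{\tfrac{C_1}{2G}}\bigr)+(3-4p)\sqrt{\tfrac{C_3}{2N}}$, while the hypothesis only guarantees $\bigl|\tfrac{h_{\mathrm A}}{NL}-\tfrac14\bigr|\ge\frac{3}{4p\epsilon}\bigl(\sqrt{\tfrac{C_2}{2N}}+\sqrt{\tfrac{C_1}{2G}}\bigr)$, because its $\sqrt{\tfrac{C_3}{2N}}$ slack is wholly consumed in converting $\E[h_{\mathrm A}]$ into $h_{\mathrm A}$ in the denominator. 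Assembling as you describe therefore yields $|\widehat p-p|\le\epsilon p\Bigl(1+\frac{(1-4p/3)\sqrt{C_3/(2N)}}{\sqrt{C_2/(2N)}+\sqrt{C_1/(2G)}}\Bigr)$, which exceeds $\epsilon p$ (by roughly a factor $2-\tfrac{4p}{3}$ when $C_1\approx C_2\approx C_3$ and $G\gg N$). To reach the theorem exactly along your route you would need the hypothesis enlarged by about $\frac{3-4p}{4p\epsilon}\sqrt{\tfrac{C_3}{2N}}$, or an extra argument exploiting that the same fluctuation $h_{\mathrm A}-\E[h_{\mathrm A}]$ enters both numerator and denominator. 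For comparison, the paper attains the stated constants only because its opening reduction silently replaces $h_{\mathrm A}$ by $\E[h_{\mathrm A}]$ in the identity $3(\E[h'_{\mathrm A}]-\E[h_{\mathrm A}])=p(NL-4\E[h_{\mathrm A}])$, i.e., it drops precisely the term your decomposition exposes; so your version is the more careful one, but as written it asserts rather than verifies that the stated hypothesis suffices, and that assertion fails.
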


\begin{proof}
    
\begin{align*}
    \Pr(|\widehat{p}-p| \leq \epsilon p) &=\Pr \left( \bigg|\frac{3\cdot (h'_{\mathrm{A}}-h_{\mathrm{A}})}{NL-4h_{\mathrm{A}}}-p \bigg|\leq \epsilon p \right) \\
    &= \Pr \left( \bigg|h'_{\mathrm{A}}-\E[h'_{\mathrm{A}}] \bigg| \leq \frac{|NL-4h_{\mathrm{A}}| \cdot p \epsilon}{3} \right) 
\end{align*}

Define the event $B_1:= \left\{|f'_{\mathrm{A}}-\E[f'_{\mathrm{A}}]| \leq \delta\cdot\E[f'_{\mathrm{A}}]\right\}$ for some suitable choice of $\delta$ to be specified later. Let $t=|NL-4h_{\mathrm{A}}| \cdot p\epsilon/3$. We bound $\Pr(|h'_{\mathrm{A}}-\E[h'_{\mathrm{A}}]| > t) = 1 - \Pr(|h'_{\mathrm{A}}-\E[h'_{\mathrm{A}}]| \le t)$ from above as follows:
\begin{align*}
    &\Pr(|h'_{\mathrm{A}}-\E[h'_{\mathrm{A}}]| > t) \\
    & \quad = \Pr(|h'_{\mathrm{A}}- \E[h'_{\mathrm{A}}|f'_{\mathrm{A}}]+\E[h'_{\mathrm{A}}|f'_{\mathrm{A}}] -\E[h'_{\mathrm{A}}]| > t) \\
    & \quad \leq \Pr(|h'_{\mathrm{A}}- \E[h'_{\mathrm{A}}|f'_{\mathrm{A}}]| > t_1 \mid B_1) \\
    & \quad \quad \quad + \Pr(|\E[h'_{\mathrm{A}}|f'_{\mathrm{A}}] -\E[h'_{\mathrm{A}}]| > t_2 \mid B_1) + \Pr(B_1^{c})
\end{align*}
where $t_1+t_2 \leq t$.

Now, we will bound these three terms individually.
First, using Hoeffding's inequality,
$$\Pr(B_1^c) \leq 2 \exp \left(-2 \delta^2 G \bigg(\frac{f_{\mathrm{A}}}{G} \bigg(1-\frac{4p}{3} \bigg)+\frac{p}{3} \bigg)^2 \right) $$
To have $\Pr(B_1^c) \leq 2 \exp(-C_1)$, it suffices to have $2 \delta^2 G \left(\dfrac{f_{\mathrm{A}}}{G} \bigg(1-\dfrac{4p}{3} \bigg)+\dfrac{p}{3} \right)^2 \geq C_1$. Thus, we take  $$\delta := \dfrac{\sqrt{C_1}}{\sqrt{2G}(\frac{f_{\mathrm{A}}}{G}(1-\frac{4p}{3})+\frac{p}{3})}.$$

Note that $\E[h'_{\mathrm{A}}|f'_{\mathrm{A}}]=\frac{NL}{G}(f'_{\mathrm{A}}(1-\frac{4s}{3})+G \frac{s}{3}) $. Also, $\E[h'_{\mathrm{A}}]=\frac{NL}{G}(\E[f'_{\mathrm{A}}](1-\frac{4s}{3})+G\frac{s}{3})$. Thus, conditioned on the event $B_1$, with probability 1,
\begin{align*}
  |\E[h'_{\mathrm{A}}|f'_{\mathrm{A}}]-\E[h'_{\mathrm{A}}]| &\leq \delta NL \frac{\E[f'_{\mathrm{A}}]}{G} \bigg(1-\frac{4s}{3} \bigg) \\
  & = \delta NL \bigg(\frac{f_{\mathrm{A}}}{G} \bigg(1-\frac{4p}{3} \bigg)\!+\frac{p}{3} \bigg) \bigg(1-\frac{4s}{3} \bigg) \\
  & = NL \bigg(1-\frac{4s}{3} \bigg) \sqrt{\frac{C_1}{2G}} \\
  &\leq NL \sqrt{\frac{C_1}{2G}} \ := \ t_2
\end{align*}

Now, we will bound $\Pr(|h'_{\mathrm{A}}- \E[h'_{\mathrm{A}}|f'_{\mathrm{A}}]| \geq t_1 \mid B_1)$. Let r.v. $P_i$ determines the starting position of $i$-th read. Let $W_i$ be random substitution process acting on $i$-th read. The pair $S_i:=(P_i,W_i)$ completely specifies the $i$-th read. Now, $S_1,\cdots,S_N$ are independent, and this independence holds even when conditioned on $f'_{\mathrm{A}}$ or on the event $B_1$. Let $Z=g(S_1,\cdots,S_N)$ denote the total number of occurrences of the symbol $\mathrm{A}$ across all $N$ reads. Since each read has fixed length $L$, changing a single read $S_i$ can change the value of $g$ by atmost $L$, i.e., for each $i=1,\cdots,N, \ |g(s_1,\cdots,s_{i-1},s_{i},s_{i+1},\cdots,s_N)| -  |g(s_1,\cdots,s_{i-1},s'_{i},s_{i+1},\cdots,s_N)| \leq L$. Thus, using McDiarmid's inequality, we get 
\begin{align*}
\Pr(|g(S_1,\cdots,S_N)-\E[g(S_1,\cdots,S_N)]&| \geq t) \\ &\leq 2 \exp \bigg(-\dfrac{2t^2}{NL^2} \bigg).
\end{align*}

Conditioning on the event $B_1$ and $f'_{\mathrm{A}}=q$, we obtain

$$\Pr(|h'_{\mathrm{A}}- \E[h'_{\mathrm{A}}|f'_{\mathrm{A}}=q]| \geq t_1 \mid B_1,f'_{\mathrm{A}}=q) \leq  2\exp \bigg(-\dfrac{2t_{1}^{2}}{NL^2} \bigg).$$

Since this bound holds uniformly for all values of $q$, it follows that
$$\Pr(|h'_{\mathrm{A}}- \E[h'_{\mathrm{A}}|f'_{\mathrm{A}}]| \geq t_1 \mid B_1) \leq  2\exp \bigg(-\dfrac{2t_{1}^{2}}{NL^2} \bigg).$$
To make $\Pr(|h'_{\mathrm{A}}-\E[h'_{\mathrm{A}}|f'_{\mathrm{A}}]|\geq t_1 \mid B_1)\le 2\exp(-C_2)$, we set $t_1 := L\sqrt{NC_2/2}$.

Now, $t_1+t_2 \leq t=|NL-4h_{\mathrm{A}}| \cdot p\epsilon/3$ yields the condition
$$ L\sqrt{\frac{NC_2}{2}}+NL\sqrt{\frac{C_1}{2G}} \leq \frac{|NL-4h_{\mathrm{A}}| \, p\epsilon}{3}, $$
which is equivalent to
$$ \left|\frac{h_{\mathrm{A}}}{NL}-0.25\right| \geq \frac{3}{4p\epsilon} \left( \sqrt{\frac{C_2}{2N}}+\sqrt{\frac{C_1}{2G}} \right) =: u $$

Define the event $B_2:=\left\{\left|\frac{h_{\mathrm{A}}}{NL}-\E\left[\frac{h_{\mathrm{A}}}{NL}\right]\right|\leq u_2\right\}$. Conditioned on $B_2$, a sufficient condition for
$\left|\frac{h_{\mathrm{A}}}{NL}-0.25\right|\geq u$ is
$\left|\E\left[\frac{h_{\mathrm{A}}}{NL}\right]-0.25\right|\geq u+u_2$.  Since $\E\left[\frac{h_{\mathrm{A}}}{NL}\right]
=\frac{f_{\mathrm{A}}}{G}\left(1-\frac{4s}{3}\right)+\frac{s}{3}$, it suffices to have
$\left|\frac{f_{\mathrm{A}}}{G}-0.25\right|
\geq u+u_2+\left|\frac{4s}{3}\frac{f_{\mathrm{A}}}{G}-\frac{s}{3}\right|$. But this, if we set $u_2 := \sqrt{C_3/(2N)}$, is precisely the inequality assumed in the statement of the theorem. With this choice of $u_2$, we further have, by McDiarmid’s inequality,
$\Pr(B_2^c)=\Pr(|h_{\mathrm{A}}-\E[h_{\mathrm{A}}]| > u_2 NL)\leq 2\exp(-2u_2^2N) = 2\exp(-C_3)$.

Putting it all together using the union bound, we obtain
\begin{align*}
    \Pr(|\hat p-p|\leq \epsilon p) & \geq 1- \Pr(|h'_{\mathrm{A}}- \E[h'_{\mathrm{A}}|f'_{\mathrm{A}}]| > t_1 \mid B_1) \\
    & \quad \quad - \Pr(|\E[h'_{\mathrm{A}}|f'_{\mathrm{A}}] -\E[h'_{\mathrm{A}}]| > t_2 \mid B_1) \\
    & \quad \quad - \Pr(B_1^{c}) - \Pr(B_2^{c}) \\
    & \geq 1-2e^{-C_2}-0-2e^{-C_1}-2e^{-C_3} \\
    & = 1-2e^{-C_1}-2e^{-C_2}-2e^{-C_3}.
\end{align*}

\end{proof}

A similar table to Table~\ref{tab:min_deviation_fA} can be obtained for different values of $G,N,p,s,$ and $\epsilon$. As an illustration, consider $G=10^7$, $L=10^3$, $N=10^6$, $p=0.2$, $s=0.03$, and $\epsilon=0.1$. For these parameters, ensuring $\Pr(|\hat p-p|\le \epsilon p)\ge 1-10^{-3}$ requires $\left|\frac{f_{\mathrm{A}}}{G}-0.25\right|\ge 0.112$. 

The main bottleneck of this analysis is that achieving $|\hat p-p|\le \epsilon p$ with high probability requires a large number of reads and a small sequencing error rate. 

\section{Simulations and Results} \label{sec:sim}

We evaluate the performance of all our proposed estimators using both synthetic and real genomic sequences. Synthetic sequences are generated under an i.i.d.\ model, where each symbol is drawn independently according to a fixed probability distribution. Real sequences are extracted from the human T2T-CHM13v2.0 reference genome \cite{nurk2022complete}.

For real data, we consider three sequences. Two of them, \textbf{D-easy} and \textbf{D-hardest}, were previously used in \cite{Wu2025.06.19.660607} and have length $10^5$. In addition, we include a longer sequence of length $10^6$, referred to as \textbf{Chr9-HSat}. 

For the purposes of this discussion, we refer to the estimators from Section~\ref{sec:nonseq} as ``non-sequencing estimators'' and to those from Section~\ref{sec:seq} as ``sequencing-based estimators''. In the case of non-sequencing estimators, where the complete $k$-mer frequency tables of both sequences $x$ and $y$ are available, we compare our estimators for $k=1$ and $k=30$ with the estimator proposed in \cite{Wu2025.06.19.660607}, which we denote by $\widehat{p}_{AH}$.\footnote{The subscript $AH$ is meant to indicate that the estimator uses abundance histogram information.} When considering sequencing-based estimators, where only reads from $x$ and $y$ are available, we evaluate only our estimators for $k=1$ and $k=30$. We use $\widehat{p}_{NS}$ to denote our non-sequencing estimators and $\widehat{p}_{S}$ for the sequencing-based estimators. Unless stated otherwise, the sequencing coverage $c = NL/G$ is fixed to 30, with read length $L = 1000$.

We evaluate the performance of all estimators using box plots of the relative error
$e = \frac{\widehat{p}}{p} - 1$, as a function of the parameters $p$, $s$, $c$, and $G$.

\subsection{Real sequences}

For each real sequence and each value of the mutation rate $p$, we generate 100 mutated sequences. The characteristics of the three sequences used in our experiments are as follows:
\begin{enumerate}
\item \textbf{D-easy:} A length-$10^5$ arbitrarily chosen substring from chr6, with no unusual repeat annotations. For $k=30$, the number of distinct $k$-mers is 99442 and the maximum multiplicity of any $k$-mer is 9. For $k=1$, we use the 1-mer $\mathrm{T}$, for which $f_{\mathrm{T}}/G = 0.301$.
\item \textbf{D-hardest:} A length-$10^6$ highly repetitive subsequence from a region annotated as ``Active $\alpha$Sat HOR'' in the chr21 centromere. For $k=30$, the number of distinct $k$-mers is 3987 and the maximum multiplicity is 127. For $k=1$, we use the 1-mer $\mathrm{G}$, with $f_{\mathrm{G}}/G = 0.168$.
\item \textbf{Chr9-HSat:} A length-$10^6$ repetitive substring from a pericentromeric region of chr9 annotated as satellite DNA (HSAT2/HSAT3). For $k=30$, the number of distinct $k$-mers is 58244 and the maximum multiplicity is 1250. For $k=1$, we use the 1-mer $\mathrm{C}$, with $f_{\mathrm{C}}/G = 0.169$.
\end{enumerate}

Figures~\ref{fig1}(\subref{fig_1a}), \ref{fig1}(\subref{fig_1b}), and \ref{fig1}(\subref{fig_1e}) compare the estimators in the non-sequencing case. For $k=30$, our estimator $\widehat{p}_{NS}$ (given by~\eqref{def:phat-4}) significantly outperforms $\widehat{p}_{AH}$ for the D-hardest and Chr9-HSat sequences. This improvement arises because $\widehat{p}_{NS}$ accounts for $k$-mer multiplicities in the mutated sequence as well. For D-easy sequence, where most $k$-mers occur only once, the performance of $\widehat{p}_{NS}$ and $\widehat{p}_{AH}$ is similar.

We also observe that the estimators using $k=30$ become unstable at higher mutation rates ($p \geq 0.3$), whereas the estimator that uses $k=1$, given by \eqref{def:phat-1}, performs poorly at low mutation rates but improves as $p$ increases. The Chr9-HSat sequence provides an example where both the $k=30$ estimators --- $\widehat{p}_{NS}$ given by~\eqref{def:phat-4} as well as $\widehat{p}_{AH}$ --- perform poorly even at low mutation rates, while the $k=1$ estimator $\widehat{p}_{NS}$ given by~\eqref{def:phat-1} achieves relatively better performance.

Figures~\ref{fig1}(\subref{fig_1c}), \ref{fig1}(\subref{fig_1d}), and \ref{fig1}(\subref{fig_1e}) present results for the sequencing-based estimators. For D-easy and D-hardest, trends similar to the non-sequencing case are observed: the $k=1$ estimator given by~\eqref{def:phat-5} performs better at higher mutation rates, whereas the $k=30$ estimator given by~\eqref{def:phat-8} performs better at lower mutation rates. Going from non-sequencing to sequencing-based estimators leads to a slight degradation in performance, but the overall performance remains within acceptable limits.

For the Chr9-HSat sequence, we only report results for the $k=1$ sequencing-based estimator $\widehat{p}_S$ given by~\eqref{def:phat-5}; for $k=30$ even the non-sequencing estimators perform poorly. In this case, $\widehat{p}_{S}$ performs well due to the large sequence length and the significant deviation of $f_{\mathrm{C}}/G$ from $0.25$.

Figures~\ref{fig1}(\subref{fig_1f}) and \ref{fig1}(\subref{fig_1g}) illustrate the effect of varying sequencing coverage $c$ and sequencing error rate $s$ while fixing $p = 0.05$, for the D-easy and D-hardest sequences. The estimator $\widehat{p}_{S}$ given by~\eqref{def:phat-5} is largely insensitive to changes in $s$, whereas the performance of $\widehat{p}_{S}$ given by~\eqref{def:phat-8} degrades as $s$ increases. This degradation affects D-easy more than D-hardest since most of the $k$-mers in D-easy occur only once. Therefore, as $s$ increases, most of the $k$-mers don't appear in the reads, allowing spurious $k$-mers generated by sequencing errors to dominate.

\subsection{Synthetic sequences}

For randomly generated synthetic sequences, we fix $p = s = 0.05$ and vary $f_{\mathrm{A}}/G$, $c$, and $G$. We consider sequence lengths $G = 10^5$ and $10^6$. For each value of $f_{\mathrm{A}}/G$, we generate 20 independent reference sequences and simulate 5 mutated sequences from each reference.

Figure~\ref{fig2}(\subref{fig_2a}) shows the performance of the $k=1$ sequencing-based estimator $\widehat{p}_{S}$ given by~\eqref{def:phat-5} as a function of $f_{\mathrm{A}}/G$, $c$, and $G$. The estimator accuracy improves as $|f_{\mathrm{A}}/G - 0.25|$ increases. Increasing the sequence length $G$ leads to a substantial improvement in performance, while increasing the coverage $c$ only slightly improves the performance. The estimator becomes unstable when $|f_{\mathrm{A}}/G - 0.25|$ is small.

Fig.~\ref{fig2}(\subref{fig_2b}) evaluates the performance of $\widehat{p}_{S}$ given by~\eqref{def:phat-8}. In contrast to the $k=1$ estimator, this estimator performs well even when $f_{\mathrm{A}}/G = 0.25$. Here, performance improves slightly with increasing $G$ and more significantly with increasing $c$.

\begin{figure*}[t]
    \centering

    \begin{subfigure}{0.4\textwidth}
        \includegraphics[width=\linewidth]{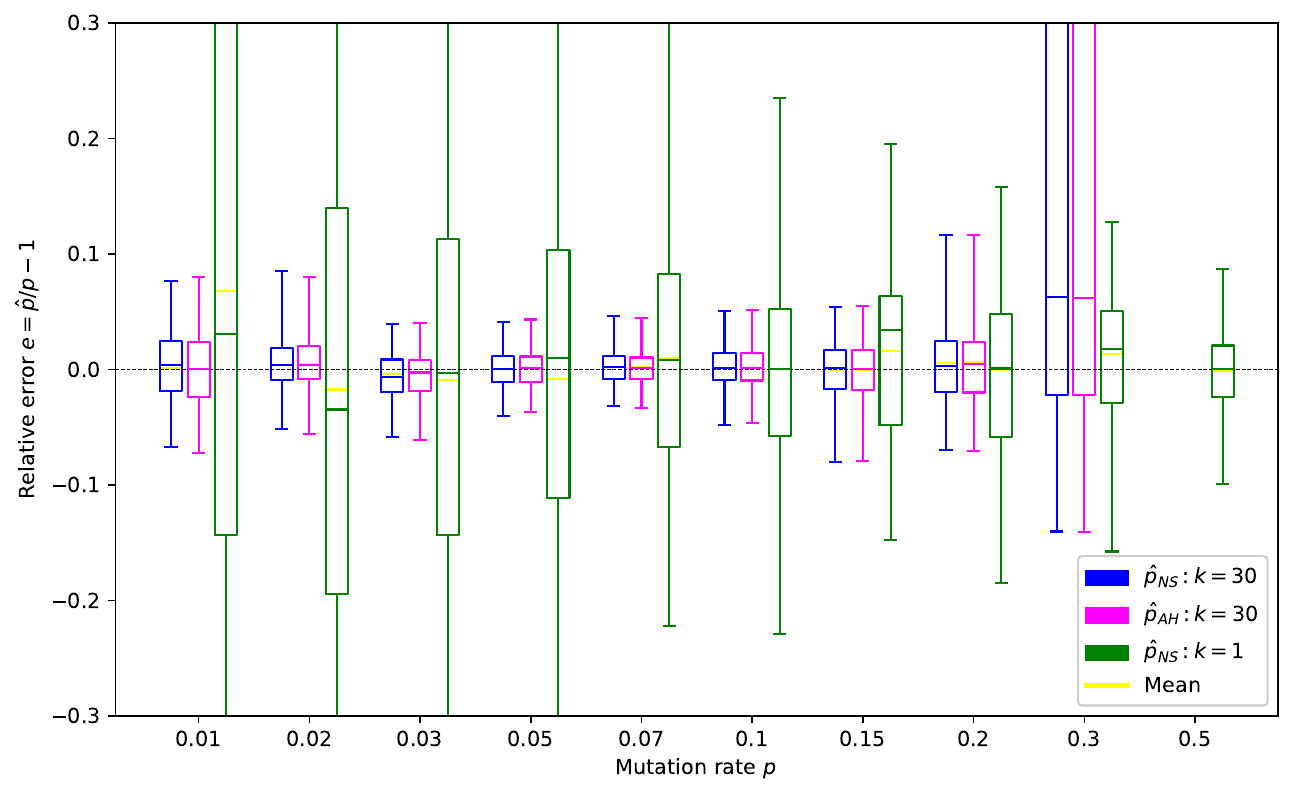}
        \caption{D-easy non-sequencing}
        \label{fig_1a}
    \end{subfigure}\hfill
    \begin{subfigure}{0.4\textwidth}
        \includegraphics[width=\linewidth]{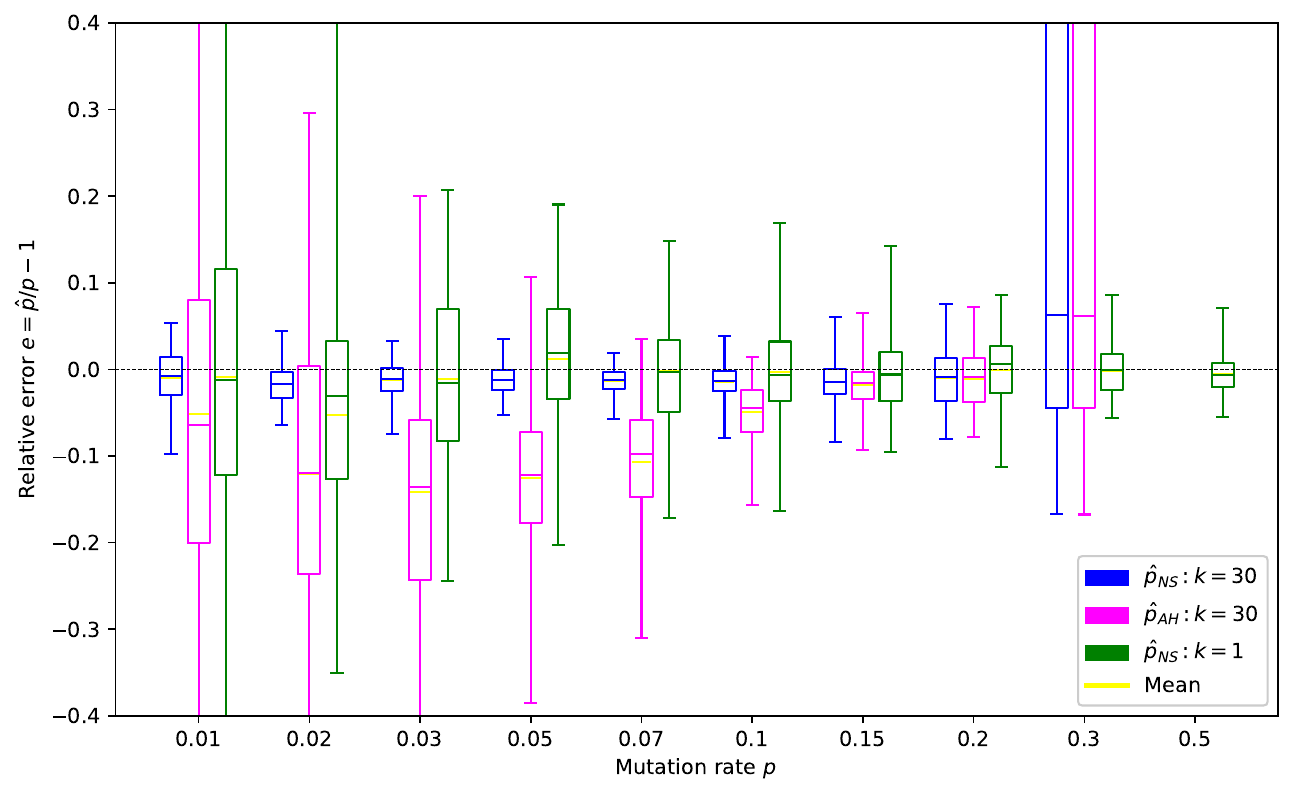}
        \caption{D-hardest non-sequencing}
         \label{fig_1b}
    \end{subfigure}
    
    \vspace{1mm}
    
    \begin{subfigure}{0.4\textwidth}
        \includegraphics[width=\linewidth]{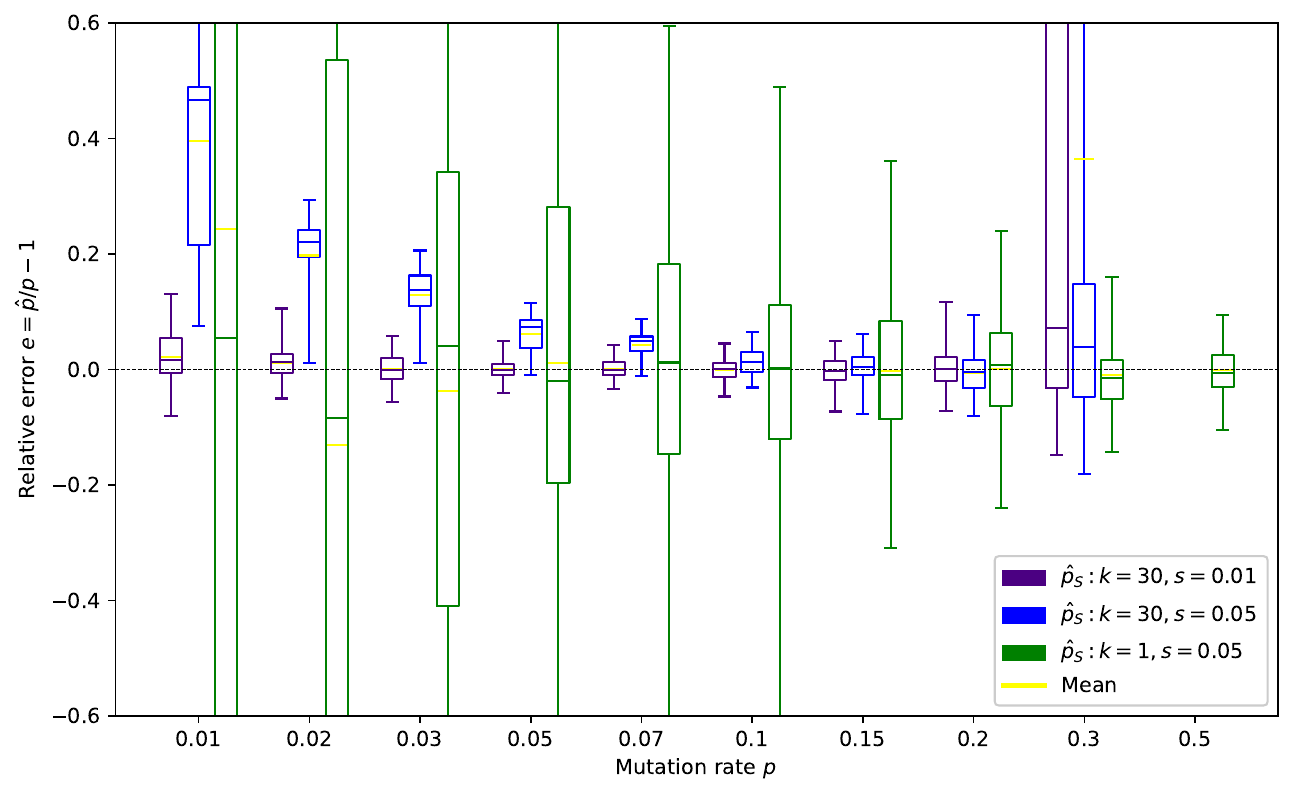}
        \caption{D-easy sequencing}
         \label{fig_1c}
    \end{subfigure}\hfill
    \begin{subfigure}{0.4\textwidth}
        \includegraphics[width=\linewidth]{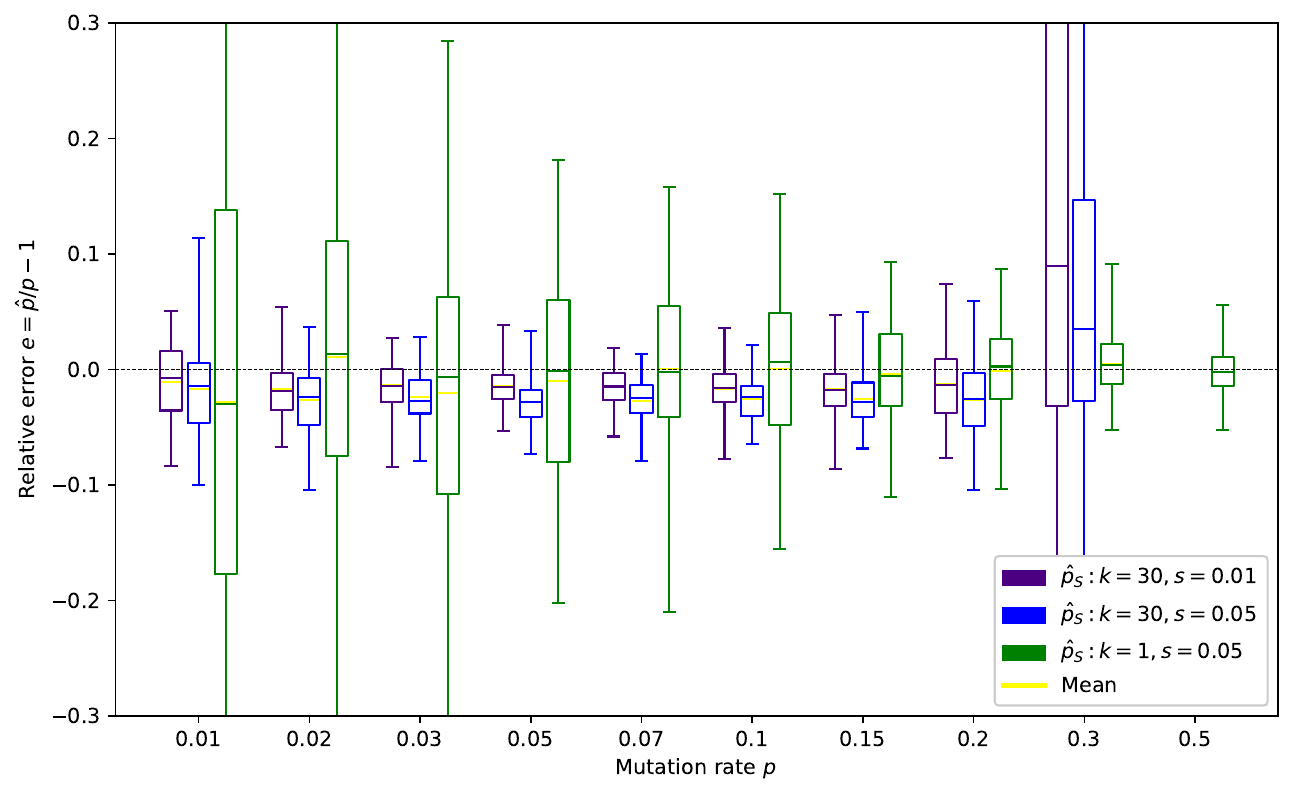}
        \caption{D-hardest sequencing}
         \label{fig_1d}
    \end{subfigure}

    \vspace{1mm}
    
    \begin{subfigure}{0.35\textwidth}
        \includegraphics[width=\linewidth]{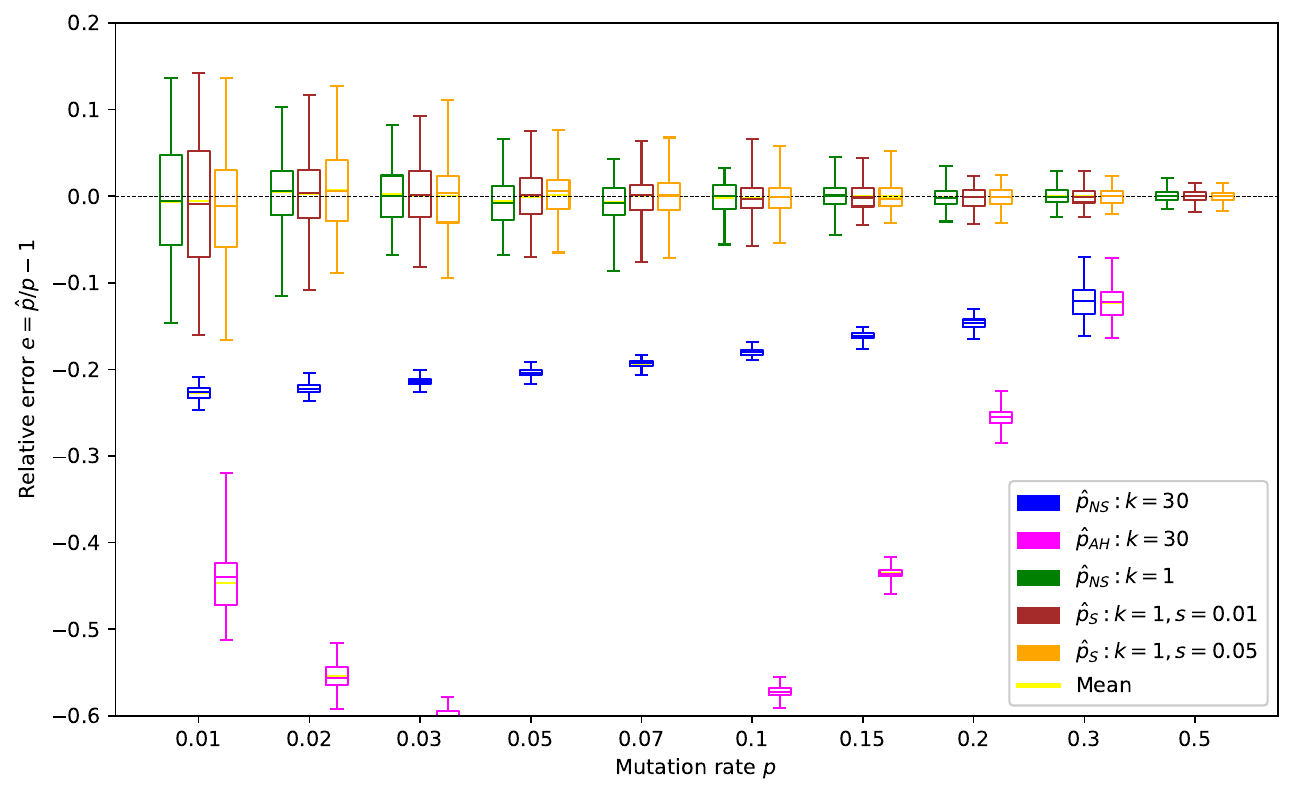}
        \caption{Chr9-HSat}
         \label{fig_1e}
    \end{subfigure}\hfill
    \begin{subfigure}{0.25\textwidth}
        \includegraphics[width=\linewidth]{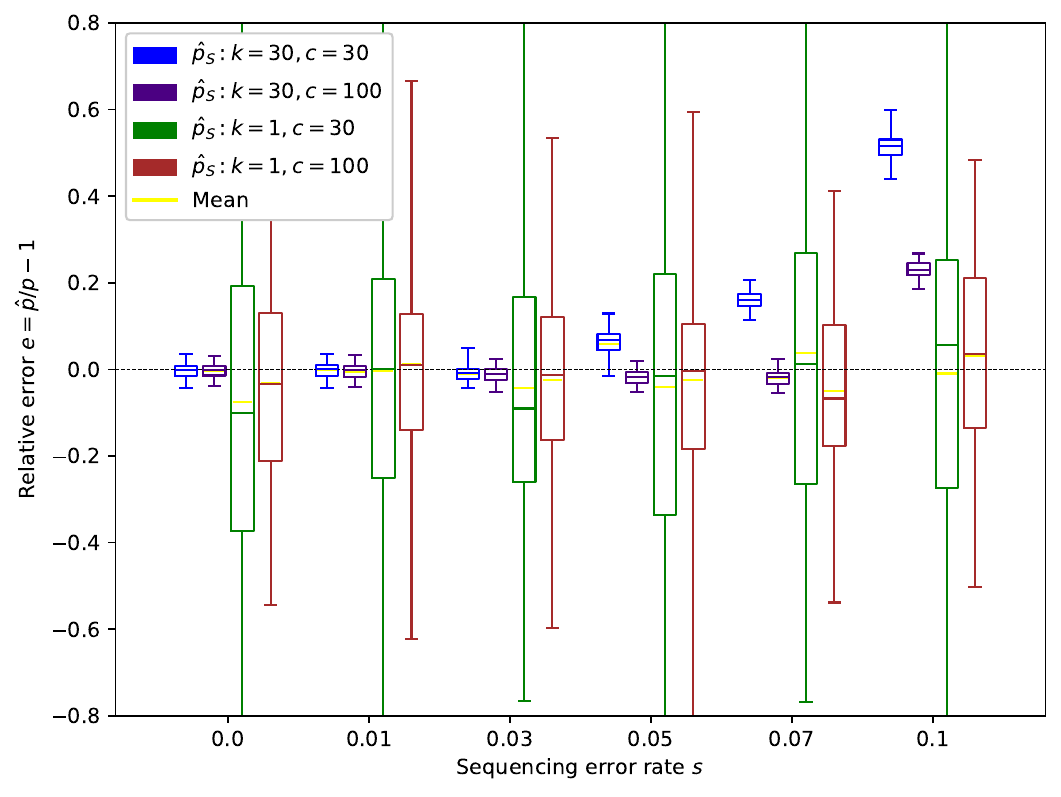}
        \caption{D-easy: Fixed $p=0.05$, varying $s$ and $c$}
         \label{fig_1f}
    \end{subfigure}\hfill
    \begin{subfigure}{0.25\textwidth}
        \includegraphics[width=\linewidth]{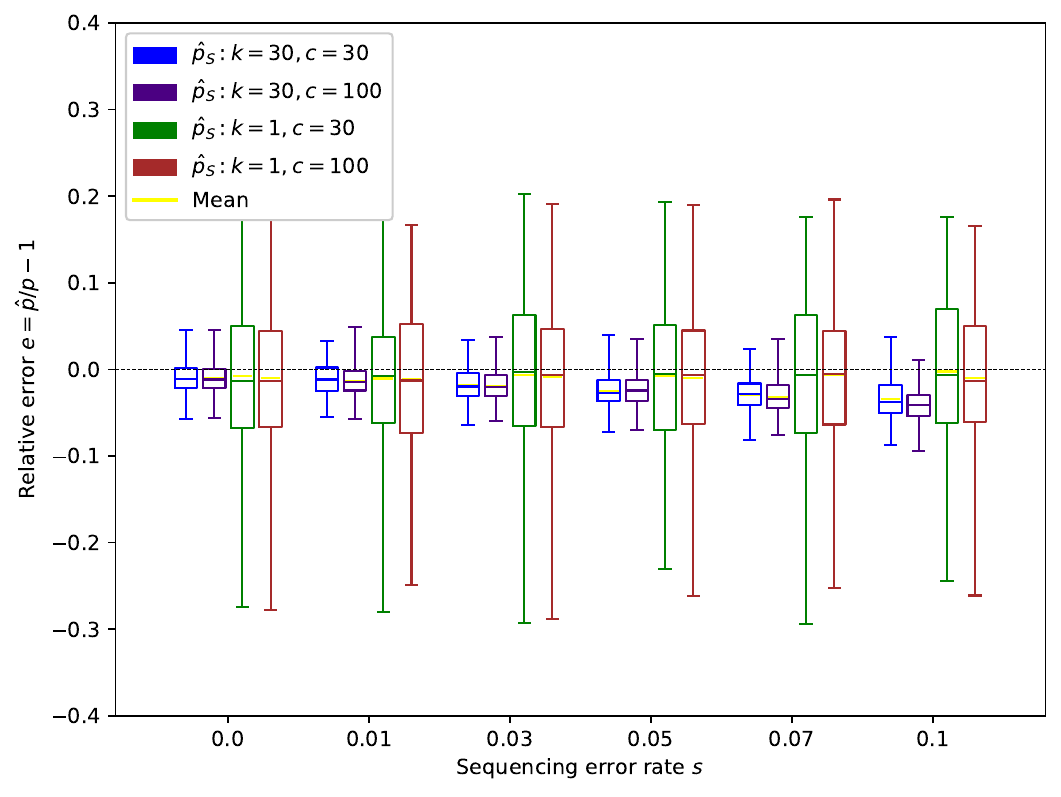}
        \caption{D-hardest: Fixed $p=0.05$, varying $s$ and $c$}
         \label{fig_1g}
    \end{subfigure}

    \caption{Real genomic sequences}
    \label{fig1}
\end{figure*}

\begin{figure*}[b]
    \centering

    \begin{subfigure}{0.6\textwidth}
        \includegraphics[width=\linewidth]{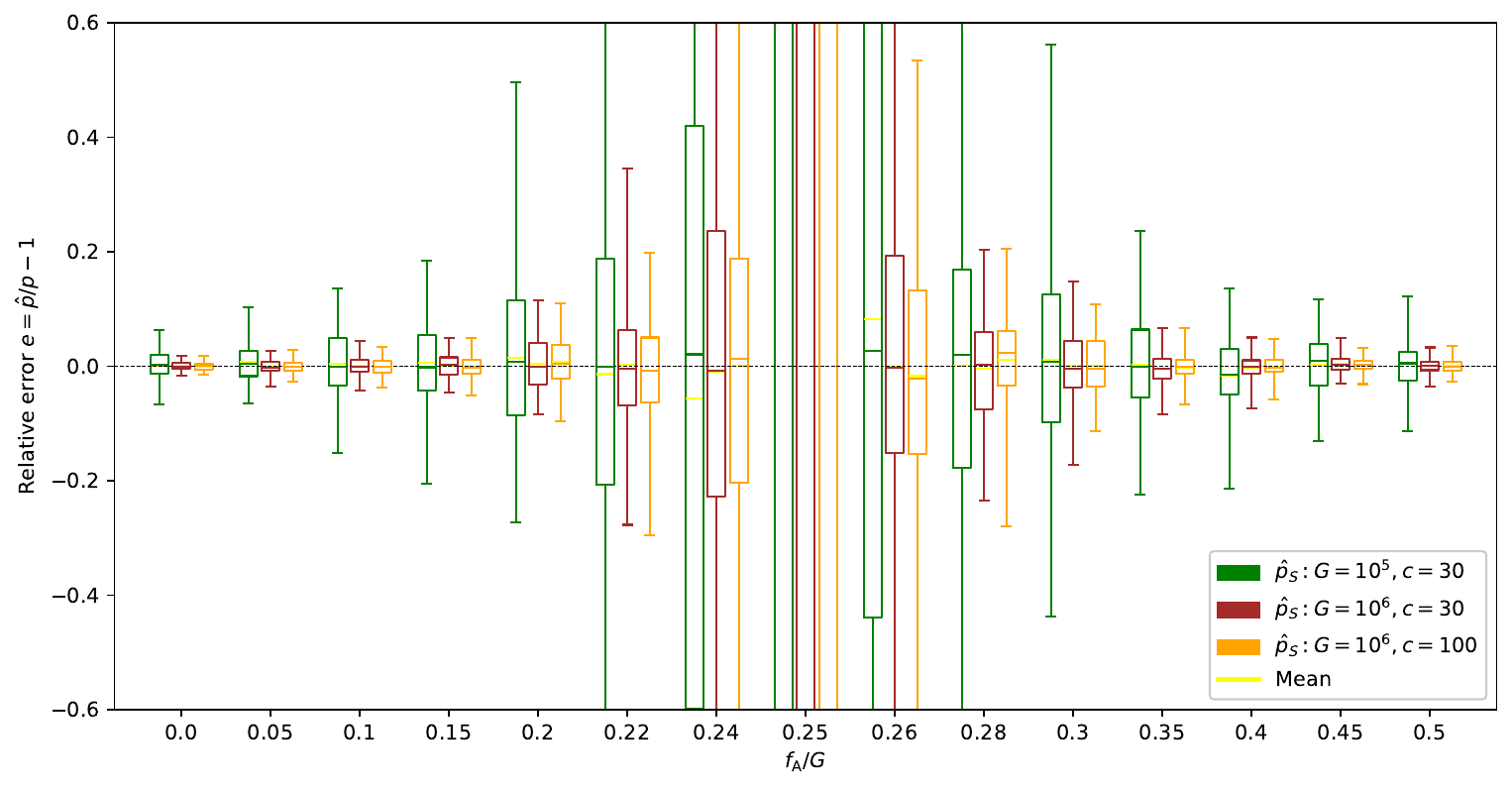}
        \caption{$k=1$}
        \label{fig_2a}
    \end{subfigure}\hfill
    \begin{subfigure}{0.2\textwidth}
        \includegraphics[width=\linewidth]{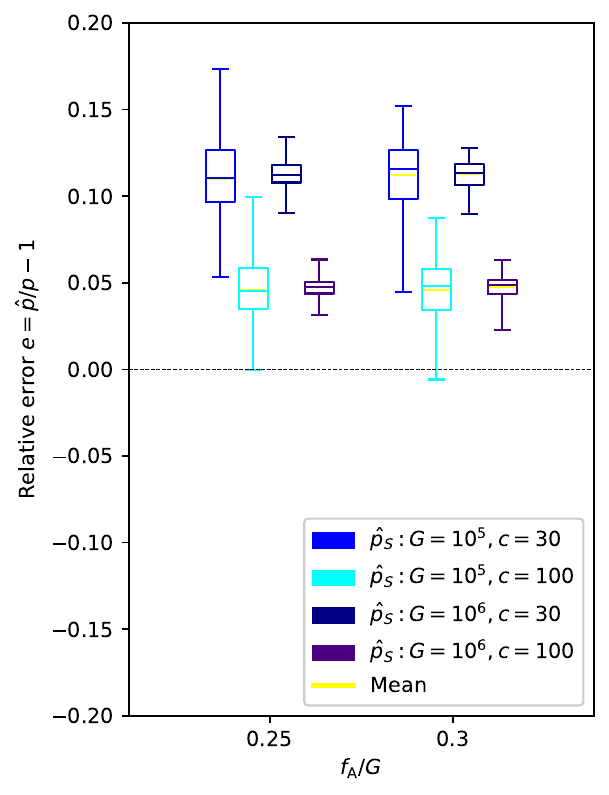}
        \caption{$k=30$}
        \label{fig_2b}
    \end{subfigure}

    \caption{Randomly generated synthetic sequences}
    \label{fig2}
\end{figure*}

\section{Conclusion and Future Work} \label{sec:conc}

In this work, we extend estimation of substitution rate between two sequences to the practically relevant setting in which only unassembled sequencing reads are available, and the underlying sequences are unknown. Our approach relies on $k$-mer statistics, and we propose two kinds of estimators: one based on $k=1$ and another based on large $k$, for which we use $k=30$ in practice. We also derived theoretical performance guarantees for the $k=1$ estimator.

Both the estimators have their pros and cons. The $k=1$ estimator is unbiased and performs well whenever $|f_v/G - 0.25| > 0$ for at least one $v \in \{A,G,C,T\}$. Its accuracy improves with increasing sequence length $G$ and higher mutation rates $p$, and it is relatively robust to sequencing errors. Moreover, the $k=1$ estimator has substantially lower time and space complexity and scales better with sequence length. In contrast, the estimator with $k=30$ is slightly biased, with a bias that depends on the underlying sequence. Nevertheless, it performs better in practice for less repetitive sequences and in regimes where the probability that a $k$-mer mutates to another $k$-mer in the sequence is low, particularly at small mutation rates. However, this estimator is more sensitive to sequencing errors.

A direction for future work is to establish theoretical performance guarantees for the large-$k$ estimator. In addition, there exist regimes and sequences for which both estimators become unstable or fail to provide accurate estimates. This motivates the open problem of designing an estimator that performs well across different regimes and sequence types, or at least for most real genomic sequences. Another possible future direction is to extend this framework to account for other types of mutations and sequencing errors such as insertions and deletions.

\textbf{Acknowledgements.}\ The authors would like to acknowledge useful discussions with Daanish Mahajan, Chirag Jain, and Paul Medvedev.

\clearpage
\raggedbottom
\bibliographystyle{ieeetr}
\bibliography{references}
\end{document}